\documentclass[11pt]{llncs}

\usepackage{fullpage}
\usepackage{epsfig}
\usepackage{graphics}
\usepackage{amsmath,setspace}

\usepackage[dvipsnames,usenames]{color}
\usepackage[colorlinks=true,urlcolor=Blue,citecolor=Green,linkcolor=BrickRed]{hyperref}
\urlstyle{same}

\newcommand{\Oh}[1]
    {\ensuremath{{O}({#1})}}

\pagestyle{plain}
\sloppy

\begin{document}

\title{Binary Jumbled Pattern Matching\\ on Trees and Tree-Like Structures\thanks{Preliminary version of this paper appeared in the 21st Annual European Symposium on Algorithms (ESA 2013)}}

\author{Travis Gagie\inst{1}   \and Danny Hermelin \inst{2}  \and Gad M. Landau\inst{3}\thanks{Supported in part by the National Science Foundation (NSF) grant 0904246, the Israel Science Foundation (ISF) grant 347/09,
and the United States-Israel Binational Science Foundation (BSF) grant 2008217 }  \and Oren Weimann\inst{3}\thanks{Supported in part by the Israel Science Foundation grant 794/13}  }
\institute{
University of Helsinki, \href{mailto:travis.gagie@aalto.fi}{travis.gagie@aalto.fi} \and
Ben-Gurion University, \href{mailto:hermelin@bgu.ac.il}{hermelin@bgu.ac.il} \and
University of Haifa, \href{mailto:landau@cs.haifa.ac.il}{landau@cs.haifa.ac.il}, \href{mailto:oren@cs.haifa.ac.il}{oren@cs.haifa.ac.il}  }

\maketitle

\begin{abstract}
Binary jumbled pattern matching asks to preprocess a binary string $S$ in order to answer queries $(i,j)$ which ask for a substring of $S$ that is of length $i$ and  has exactly $j$ 1-bits. This problem naturally generalizes to vertex-labeled trees and graphs by replacing ``substring'' with ``connected subgraph''.
In this paper, we give an $O(n^2 / \log^2 n)$-time solution for trees, matching the currently best bound for (the simpler problem of) strings. We also give an $\Oh{g^{2 / 3} n^{4 / 3}/(\log n)^{4/3}}$-time solution for strings that are compressed by a grammar of size $g$. This solution improves the known bounds when the string is compressible under many popular compression schemes. Finally, we prove that on graphs the problem is fixed-parameter tractable with respect to the treewidth $w$ of the graph, even for a constant number of different vertex-labels, thus improving the previous best $n^{O(w)}$ algorithm.\end{abstract}

\section{Introduction}
\label{sec:introduction}

\emph{Jumbled pattern matching} is an important variant of classical pattern matching with several applications in computational biology, ranging from alignment~\cite{Alignment} and  SNP discovery~\cite{SNP}, to the interpretation of mass spectrometry data~\cite{BCFL10} and metabolic network analysis~\cite{LFS06}. In the most basic case of strings, the problem asks to determine whether a given pattern $P$ can be rearranged so that it appears in a given text $T$. That is, whether $T$ contains a substring of length $|P|$ where each letter of the alphabet occurs the same number of times as in $P$.
Using a straightforward sliding window algorithm, such a jumbled occurrence can be found optimally in $O(n)$ time on a text of length $n$. While jumbled pattern matching has a simple efficient solution, its {\em indexing} problem is much more challenging. In the indexing problem,
we preprocess a given text $T$ so that on queries $P$ we can determine quickly whether $T$ has a jumbled occurrence of $P$. Very little is known about this problem besides the trivial naive solution.

Most of the interesting results on indexing for jumbled pattern matching relate to binary strings (where a query pattern $(i,j)$ asks for a substring of $T$ that is of length $i$ and has $j$ 1s).
Given a binary string of length $n$, Cicalese, Fici and Lipt\'ak~\cite{CFL09} showed how one can build in $O(n^2)$ time an $O(n)$-space index that answers jumbled pattern matching queries in $O(1)$ time.
Their key observation was that if one substring of length $i$ contains fewer than $j$ 1s, and another substring of length $i$ contains more than $j$ 1s, then there must be a substring of length $i$ with exactly $j$ 1s. Using this observation, they construct an index that stores the maximum and minimum number of 1s in any $i$-length substring, for each possible~$i$.
Burcsi \emph{et al.}~\cite{BCFL10} (see also~\cite{BCFL12a,BCFL12b}) and Moosa and Rahman~\cite{MR10} independently improved the construction time to $O(n^2 / \log n)$, then Moosa and Rahman~\cite{MR12} further improved it to $O(n^2 / \log^2 n)$ in the word RAM model.
Currently, faster algorithms than $O(n^2 / \log^2 n)$ exist only when the string compresses well using run-length encoding~\cite{BFKL12,GG12} or when we are willing to settle for approximate indices~\cite{CLWY12}.
Regarding non-binary alphabets, the recent solution of Kociumaka, Radoszewski and Rytter~\cite{ESA13}  for constant alphabets requires  $o(n^2)$ space and $o(n)$ query
time. For alphabets of size
$\omega(1)$, sublinear query time was achieved by Burcsi et al.~\cite{BCFL12a} for large query patterns but in the worst case a query takes superlinear time. In fact, a recent result of Amir et al.~\cite{Amir} shows that under the popular 3-SUM conjecture, jumbled indexing for alphabets of size $\omega(1)$ requires either $\Omega(n^{2-\varepsilon})$ preprocessing time or $\Omega(n^{2-\delta})$ query time for any $\varepsilon, \delta >0$.

The natural extension of jumbled pattern matching from strings to trees is much harder. In this extension, we are asked to determine whether a vertex-labeled input tree has a connected subgraph where each label occurs the same number of times as specified by the input query. The difficulty here stems from the fact that a tree can have an exponential number of connected subgraphs as opposed to strings. Hence, a sliding window approach becomes intractable. Indeed, the problem is NP-hard~\cite{LFS06}, even if our query contains at most one occurrence of each letter~\cite{FFHV11}. It is
not even fixed-parameter tractable when parameterized by the alphabet size~\cite{FFHV11}.
The fixed-parameter tractability of the problem was further studied when extending the problem from trees to graphs~\cite{AmbalathBHKMPR10,BetzlerBFKN11,DondiFV11,DondiFV11b}. In particular, the problem (also known as the {\em graph motif problem}) was recently shown by Fellows \emph{et al.}~\cite{FFHV11} to be polynomial-time solvable when the number of letters in the alphabet as well as the treewidth of the  graph are both fixed. They also gave an fixed-parameter algorithm when the size of the pattern is taken as a parameter, and showed that no such algorithm is likely to exist when the problem is parameterized by the alphabet size, even in case the input graph is a tree. This latter result implies that assuming the Exponential Time Hypothesis (ETH), there is no $n^{o(\sqrt{|\Sigma|})}$ algorithm for jumbled pattern matching on trees over general alphabets $\Sigma$.

 \paragraph{\bf Our results.}
In this paper we extend the currently known state-of-the-art for binary jumbled pattern matching. Our results focus on trees, and tree-like structures such as grammars and bounded treewidth graphs. We use the word RAM model of computation with the standard assumption that the word-length is at least $\log n$.

\begin{itemize}
\item[$\bullet$] {\bf Trees:} For a tree $T$ of size $n$, we present an index of size $O(n)$ bits that is constructed in
$O(n^2 / \log^2 n)$ time and answers binary jumbled pattern matching queries in $O(1)$ time. This matches the performance of the best known index for binary strings. In fact, our index for  trees is obtained by multiple applications of an efficient algorithm for strings~\cite{MR12} under a more careful analysis. This is combined with both a micro-macro~\cite{MicroMacro} and centroid decomposition of the input tree.
Our index can also be used as an $O(ni/\log^2 n)$-time algorithm for the pattern matching (as opposed to the indexing) problem, where $i$ denotes the size of the pattern. Finally, by increasing the space of our index to $O(n \log n)$ bits, we can output in $O(\log n)$ time a node of $T$ that is part of the pattern occurrence.\\

\item[$\bullet$] {\bf Grammars:} For a binary string $S$ of length $n$ derived by a grammar of size $g$, we show how to construct in $O(g^{2/3} n^{4/3}/\log^{4/3} n)$ time an index of size $O(n)$ bits that answers jumbled pattern matching queries on $S$ in $O(1)$ time.
The size of the grammar $g$ can be exponentially smaller than $n$ and is always at most $O(n/\log n)$. This means that our  time bound is $O(n^2 / \log^2 n)$ even when $S$ is not compressible. If $S$ is compressible but with other compression schemes such as the LZ-family, then we can transform it into a grammar-based
compression with little or no expansion~\cite{Charikaretal2005,Ryt03}.\\

\item[$\bullet$] {\bf Bounded Treewidth Graphs:}  For a graph $G$ with treewidth bounded by $w$, we show how to improve on the $O(n^{O(w)})$ time algorithm of Fellows \emph{et al.}~\cite{FFHV11} to an algorithm which runs in $2^{O(w^3)}n+w^{O(w)}n^{O(1)}$ time. Thus, we show that for a binary alphabet, jumbled pattern matching is fixed-parameter tractable when parameterized only by the treewidth. This result extends easily to alphabets of constant sizes.
\end{itemize}

\noindent We present our results for trees, grammars, and bounded treewidth graphs in sections 2, 3 and 4 respectively.

\section{Jumbled Pattern Matching on Trees}
\label{sec:motifs}

In this section we consider the natural extension of binary jumbled pattern matching to trees. Recall that in this extension we are given a tree $T$ with $n$ nodes, where each node is labeled by either 1 or 0. We will refer to the nodes labeled 1 as \emph{black} nodes, and the nodes labeled 0 as \emph{white} nodes. Our goal is to construct a data structure that on query $(i,j)$ determines whether $T$ contains a connected subgraph with exactly $i$ nodes, $j$ of which are black. Such a subgraph of $T$ is referred to as a \emph{pattern} and $(i,j)$ is said to {\em appear} in $T$. The main result of this section is stated below.

\begin{theorem}
\label{thm:motifs}
Given a tree $T$ with $n$ nodes that are colored black or white, we can construct in $O(n^2/\log^2 n)$ time a data structure of size $O(n)$ bits that given a query $(i,j)$ determines in $O(1)$ time if  $(i,j)$ appears in $T$.
\end{theorem}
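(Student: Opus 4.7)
The plan is to reduce the theorem to computing two length-$n$ sequences: $m_i$ (resp.\ $M_i$), the minimum (resp.\ maximum) number of black nodes over all connected subgraphs of $T$ with exactly $i$ nodes. I would first prove an interval property: $(i,j)$ appears in $T$ if and only if $j \in [m_i, M_i]$. This is a discrete intermediate-value argument: fix $i$ and transform any $i$-vertex connected subgraph into any other by a sequence of ``shift'' moves that each add a vertex adjacent to the current subgraph and then delete a (distinct) leaf of the resulting $(i{+}1)$-vertex subgraph; each move changes the black count by at most $1$, so every value between two attained ones is attained. I would also show $m_i \le m_{i+1} \le m_i+1$ (and symmetrically for $M_i$) by adding or removing a single leaf from an extremal subgraph, so both sequences compress to $n$-bit difference strings admitting $O(1)$-time access via standard $O(n)$-bit rank machinery. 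This gives the claimed index once the two sequences are known.

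For the actual computation I would use a centroid decomposition of $T$. At each centroid $c$, every connected subgraph of the current piece either contains $c$ or lies entirely in one of the components of that piece $\setminus\{c\}$; the latter case is handled by recursing into the component. Centroid decomposition has $O(\log n)$ levels and the subtree sizes at any fixed level sum to $n$, so the target bound $O(n^2/\log^2 n)$ follows provided that at each centroid $c$ of a subtree of size $m$ we compute the min/max black counts over all $c$-containing connected subgraphs of size $s \le m$ in $O(m^2/\log^2 n)$ time. A $c$-containing connected subgraph is the union of $c$ with, for each child $u$ of $c$, a (possibly empty) $u$-containing connected subgraph of $T_u$, so the two functions we need at $c$ are the $(\min,+)$- and $(\max,+)$-convolutions of the corresponding per-child functions $f_u, g_u$.

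To carry out each per-centroid computation within the budget I would combine a micro-macro decomposition with the $O(n^2/\log^2 n)$-time string algorithm of Moosa and Rahman~\cite{MR12}. Each subtree is decomposed into micro-subtrees of size $\Theta(\log n)$ together with a macro skeleton of $O(m/\log n)$ nodes, and all possible per-micro min/max functions are tabulated once via a 4-Russian table of total size $o(n)$. Heavy paths in the macro skeleton furnish the linear substrates on which Moosa--Rahman is invoked to obtain the min/max statistics over all contiguous extensions of a path prefix; the monotonicity property of $m_i, M_i$ established in Step~1, inherited by the per-child $f_u, g_u$, guarantees that these sub-convolutions really reduce to binary-string min/max statistics over intervals. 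The main obstacle is precisely this reduction together with the bookkeeping needed to amortize the work across levels of the centroid tree without losing an extra $\log n$ factor; this is the ``more careful analysis'' alluded to in the introduction, and completing it along the lines sketched yields Theorem~\ref{thm:motifs}.
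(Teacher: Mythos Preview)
You share the essential ingredients with the paper---the interval property, the $\pm 1$ monotonicity that encodes each array $A_v$ as a bit string, the micro--macro decomposition, and the Moosa--Rahman speedups---but your architecture differs in a way that obscures where the work actually happens. The paper does \emph{not} use centroid decomposition for the index construction (centroid decomposition appears only in \S\ref{subsec:listing}, for locating an anchor). Instead it roots $T$, binarizes it, and performs a single bottom-up pass computing $A_v$ for every node $v$. The key reduction you allude to but never state explicitly is that the two-child merge is literally a string instance: writing $S_v$ as (essentially) the reversed bit-encoding of $A_u$, then $col(v)$, then the bit-encoding of $A_w$, the value $A_v[i]$ equals the maximum number of $1$s in any length-$i$ window of $S_v$ covering the split position. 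Moosa--Rahman then gives cost $O(|T_u|\cdot|T_w|/\log^2 n)$ per merge \emph{provided} both sides exceed $s=\Theta(\log n)$; when one side is short the cost degrades to $O(n/\log n)$. The micro--macro decomposition is deployed precisely to cap the number of short-side merges at $O(n/\log n)$: the macro tree has $O(n/\log n)$ micro trees, and by maintaining four boundary arrays $A_C,A_t,A_b,A_{tb}$ per micro tree, the bottom-up traversal incurs only $O(1)$ degraded merges per micro tree; all remaining merges have both sides of length at least $\log n$ and sum to $O(n^2/\log^2 n)$ by the subtree-product argument of Lemma~\ref{lem:quardratic}.

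Your centroid wrapper is therefore superfluous: your per-centroid task---compute $\max/\min$ over all $c$-containing connected subgraphs of a piece of size $m$ in $O(m^2/\log^2 n)$ time---is already the full theorem applied to the piece rooted at $c$ (the global array falls out of that same bottom-up pass at no extra cost), so if you can do it once you need no recursion at all. The genuine gap is inside that step. Your heavy-path sketch is underspecified at exactly the bottleneck: how the light subtrees hanging off each heavy path get convolved in without producing $\Theta(m)$ merges each costing $\Theta(m/\log n)$. The paper's four-array boundary machinery on the macro tree is what actually controls this, and it is not a heavy-path argument.
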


\noindent Notice that the bounds of Theorem~\ref{thm:motifs} match the currently best bounds for the case where $T$ is a string~\cite{MR10,MR12}. This is despite the fact that a string has only $O(n^2)$ substrings while a tree can have $\Omega(2^n)$ connected subgraphs. The following lemma indicates an important property of string jumbled pattern matching that carries on to trees. It gives rise to a simple index described below.

\begin{lemma}
\label{lemma:minmax}
If  $(i,j_1)$ and $(i,j_2)$ both appear in $T$, then for every $j_1\le j \le j_2$,  $(i,j)$ appears in $T$.
\end{lemma}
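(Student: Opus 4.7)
The plan is to prove the stronger structural statement that any two connected subtrees of $T$ of the same size $i$ can be transformed into one another by a sequence of \emph{single-vertex swaps}: remove a leaf of the current subtree, then attach a vertex adjacent to what remains, with every intermediate subgraph again a connected subtree of size $i$. Each swap replaces one vertex by another, so the number of black vertices changes by at most one. Applying this to witnesses $S_1, S_2$ for $(i,j_1)$ and $(i,j_2)$ produces a sequence $S_1 = U_0, U_1, \ldots, U_k = S_2$ whose black-vertex count starts at $j_1$, ends at $j_2$, and moves by at most one per step; a discrete intermediate value argument then yields a witness for every $j$ with $j_1 \le j \le j_2$.

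To prove swap-connectivity I would use the potential $\phi(S) = |S \setminus S_2| + d_T(S, S_2)$, where $d_T(S, S_2)$ is the length of the unique $T$-path from $S$ to $S_2$ (zero if the sets meet), and show that whenever $S \neq S_2$ one can execute a swap that strictly decreases $\phi$. When $S \cap S_2 \neq \emptyset$, the crucial observation is that $S$ must contain a leaf $v \notin S_2$: otherwise every leaf of $S$ lies in $S_2$, but any subtree of a tree equals the Steiner hull in $T$ of its own leaves, so $S \subseteq S_2$, and $|S|=|S_2|$ forces $S = S_2$. Removing such a $v$ keeps $S \setminus \{v\}$ connected; and since $S_2$ is connected with both $S \cap S_2$ and $S_2 \setminus S$ nonempty, it contains an edge from some $w \in S \cap S_2$ to some $u \in S_2 \setminus S$, so adding $u$ is a valid swap and strictly decreases $|S \setminus S_2|$. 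When $S \cap S_2 = \emptyset$, take the unique $T$-path $x_0 \cdots x_m$ with $x_0 \in S$ and $x_m \in S_2$; pick any leaf $v \neq x_0$ of $S$ (guaranteed when $|S|\ge 2$, since a tree on at least two vertices has at least two leaves), remove $v$, and add $x_1$, which is adjacent to $x_0 \in S \setminus \{v\}$; this strictly decreases $d_T(S, S_2)$.

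The step I expect to be most delicate is the leaf-existence argument — that $S$ must contain a leaf outside $S_2$ whenever $S \neq S_2$ — since it is the one place the tree structure of $T$ is genuinely used, and the analogous statement can fail in general graphs. Once it is in place, termination is immediate: $\phi$ is a nonnegative integer that strictly decreases at every swap, so we reach $S_2$ after at most $O(n)$ swaps, and the trivial case $i=1$ is handled separately.
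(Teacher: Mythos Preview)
Your proposal is correct and follows the same strategy as the paper: transform a witness for $(i,j_1)$ into a witness for $(i,j_2)$ by a chain of single-vertex swaps (remove a leaf, add an adjacent vertex), so the black-vertex count changes by at most one per step and the discrete intermediate value theorem applies. The paper's own proof simply asserts that such a swap sequence exists; your potential-function argument with the leaf-outside-$S_2$ observation rigorously fills in exactly the detail the paper leaves to the reader.
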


\begin{proof}
Let $j$ be an arbitrary integer with $j_1 \leq j \leq j_2$, and let $T_1$ and $T_2$ be two patterns in $T$ corresponding to $(i,j_1)$ and $(i,j_2)$ respectively. The lemma follows from the fact that there exists a sequence of patterns starting with $T_1$ and ending with $T_2$ such that every pattern has exactly $i$ nodes and two consecutive patterns differ by removing a leaf from the first pattern and adding a different node instead. This means that the number of black nodes in two consecutive patterns differs by at most 1.
\qed
\end{proof}

\subsection{\bf A Simple Index}
As in the case of strings, the above lemma suggests an $O(n)$-size data structure: For every $i=1,\ldots,n$, store the minimum and maximum values $i_{min}$ and $i_{max}$ such that  $(i,i_{min})$ and $(i,i_{max})$ appear in $T$. This way, upon query $(i,j)$, we can report in constant time whether $(i,j)$ appears in $T$ by checking if $i_{min} \le j \le i_{max}$. However, while $O(n^2)$ construction-time is trivial for strings (for every $i=0,\ldots,n$, slide a window of length $i$ through the text in $O(n)$ time) it is harder on trees.

To obtain $O(n^2)$ construction time, we begin by converting our tree into a rooted binary tree. We arbitrarily root the tree $T$. To convert it to a binary tree, we duplicate each node with more than two children as follows: Let $v$ be a node with children $u_1, \ldots,u_k$, $k \geq 3$. We replace $v$ with $k-1$ new nodes $v_1,\ldots,v_{k-1}$, make $u_1$ and $u_2$ be the children of $v_1$, and make $v_{\ell-1}$ and $u_{\ell+1}$ be the children of $v_{\ell}$ for each $\ell=2,\ldots,k-1$. If $v$ is not the root then we set the parent of $v_{k-1}$ to be the parent of $v$ (otherwise, $v_{k-1}$ is the root). The node $v_1$ gets the same color as the the node $v$. The other nodes $v_2,\ldots,v_k$ are called \emph{dummy nodes} and have no color. This procedure at most doubles the size of $T$. To avoid cumbersome notation, we henceforth use $T$ and $n$ to denote the resulting  rooted binary tree and its number of nodes respectively. For a node $v$, we let $T_v$ denote the subtree of $T$ rooted at $v$ (\emph{i.e.} the connected subgraph induced by $v$ and all its descendants).

Next, in a bottom-up fashion, we compute for each node $v$ of $T$ an array $A_v$ of size $|T_v|+1$. The entry $A_v[i]$ will store the maximum number of black nodes that appear in a connected subgraph of size $i$ that includes $v$ and another $i-1$ nodes in $T_v$. Computing the minimum (rather than maximum) number of black nodes is done similarly. 
Throughout the execution, we also maintain a global array $A$ such that $A[i]$ stores the maximum $A_v[i]$ over all nodes $v$ considered so far. Notice that in the end of the execution, $A[i]$ holds the desired value $i_{max}$ since every connected subgraph of $T$ of size $i$ includes some node $v$ and $i-1$ nodes in $T_v$.

We now show how to compute $A_v[i]$ for a node $v$ and a specific value $i \in \{1,\ldots,|T_v|\}$. If $v$ has a single child $u$, then $v$ is necessarily not a dummy node and we set $A_v[i]=col(v)+A_u[i-1]$, where $col(v)=1$ if $v$ is black and $col(v)=0$ otherwise. If $v$ has two children $u$ and $w$, then any pattern of size $i$ that appears in $T_v$ and includes $v$ is composed of $v$, a pattern of size $\ell$ in $T_u$ that includes $u$, and a pattern of size $i-1-\ell$ in $T_w$ that includes $w$. We therefore set  $A_v[i]= col(v)+\max_{0 \leq \ell \leq i-1} \{A_u[\ell]+A_w[i-1-\ell]\}$ and
$A_v[i]=\max_{1 \leq \ell \leq i-1}\{A_u[\ell]+A_w[i-1-\ell]\}$ when $v$ is a dummy node. Observe that in the latter the index $\ell$  starts with 1 to indicate that the non-dummy copy of $v$ (i.e., $v_1$) must be included in the pattern. 

We next analyze and then improve the above algorithm (first by one log factor and then by another log factor). In the rest of this section, like the above algorithm, all of our algorithms will compute the $A_v$ arrays for each $v$ in $T$ in a bottom-up fashion. In all these algorithms, just like in the above algorithm, a special attention has to be given to the case where $v$ is a dummy node. Handling dummy nodes is done similarly to the above. To make the presentation simpler we will assume that there are no dummy nodes at all.  

\begin{lemma}
\label{lem:quardratic}%
The above algorithm runs in $O(n^2)$ time.
\end{lemma}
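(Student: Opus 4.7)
The plan is to use the classical ``pair-counting'' amortization argument for tree DPs. First I would verify the per-node cost: for a node $v$ with two children $u$ and $w$, filling the array $A_v$ takes $O(|T_u|\cdot|T_w|)$ time, since each entry $A_v[i]$ requires a maximum over at most $i$ choices of $\ell$, and $i$ ranges up to $|T_v| = |T_u|+|T_w|+1$; cutting off the range at $\min(i,|T_u|)$ and $\min(i-1,|T_w|)$ gives the product bound. For a node with a single child the cost is $O(|T_v|)$, which is dominated by the two-child case.

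Next I would bound the total cost by summing $|T_u|\cdot|T_w|$ over all internal nodes $v$. The key observation is the bijection: the product $|T_u|\cdot|T_w|$ counts exactly the ordered pairs $(x,y)$ of nodes with $x\in T_u$ and $y\in T_w$, and every such pair satisfies $\mathrm{lca}(x,y)=v$. Since each unordered pair of distinct nodes has a unique lowest common ancestor, summing over all $v$ counts each pair at most twice, yielding
\[
\sum_{v} |T_u(v)|\cdot|T_w(v)| \;\le\; 2\binom{n}{2} \;=\; O(n^2).
\]
Combined with the $O(n)$ overhead per node for maintaining the global array $A$, the total running time is $O(n^2)$, as claimed.

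I do not foresee any real obstacle: the only point requiring a little care is that the binarization step may insert dummy nodes, but it only doubles $n$ and the dummy case in the recurrence still uses the same $O(|T_u|\cdot|T_w|)$ table-combination bound, so the argument is unaffected. The bijection between table-entry updates and pairs of nodes sharing a given lca is exactly what makes the naive $O(n^2)$ bound tight and straightforward.
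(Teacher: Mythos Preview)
Your proof is correct. The per-node cost analysis matches the paper's, and your global bound via the LCA pair-counting bijection is a standard and valid argument: each product $|T_u|\cdot|T_w|$ charges pairs of nodes whose lowest common ancestor is $v$, and such pairs are disjoint across different $v$, so the total is at most $\binom{n}{2}=O(n^2)$. (Your factor of~2 is actually unnecessary---each unordered pair is counted exactly once at its LCA, not twice---but this is harmless.)

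The paper takes a different route: instead of the combinatorial bijection, it proves directly by induction on the tree that $\sum_{v\in T_r}\alpha(v)\beta(v)\le |T_r|^2$, using the algebraic inequality $x(n{-}1{-}x)+x^2+(n{-}1{-}x)^2<n^2$. Both arguments are elementary and of comparable length. Your bijection argument is perhaps more conceptual and generalizes immediately to explain \emph{why} the bound is $\Theta(n^2)$ in the worst case (balanced trees realize $\Theta(n^2)$ pairs), whereas the paper's inductive inequality is self-contained and does not rely on the reader recognizing the LCA interpretation. Either proof would be fully acceptable here.
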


\begin{proof}
The computation done on nodes with one child requires $O(n)$ time, hence the total time required to compute all arrays $A_v$ for such nodes is $O(n^2)$. The time required to compute all arrays for nodes with two children is asymptotically bounded by the sum $\sum_v \alpha(v)\beta(v)$, where $\alpha(v)$ and $\beta(v)$ denote the sizes of the two subtrees rooted at each of the children of $v$, and the sum is taken over all nodes $v$ with two children. For a tree rooted at $r$, we let $cost(r)$ denote this sum over all nodes in $T_r$ and argue by induction that $cost(r)$ is bounded by $|T_r|^2 = O(n^2)$.

Let $r$ be the root of a tree with $n$ nodes, and let $u$ and $v$ denote the two children of $r$. Let $x$ denote the size of the subtree rooted at $u$. Then $x < n$, and the size of the subtree rooted at $v$ is $n-1-x$. By induction, we have $cost(u) \leq x^2$ and $cost(v) \leq (n-1-x)^2$. Thus, $
cost(r)  =  x(n-1-x) + cost(u) + cost(v)   <  n^2-x(n-x)   \leq  n^2 $.
 \qed
\end{proof}

Note that if at any time the algorithm only stores arrays $A_v$ which are necessary for future computations, then the total space used by the algorithm is $O(n)$. The space can be made $O(n)$ {\em bits} by storing the $A_v$ arrays in a succinct fashion (this  will also prove useful later for improving the running time): Observe that $A_v[i+1]$ is either equal to $A_v[i]$ or to $A_v[i]+1$. This is because any pattern of size $i$ with $b$ black nodes can be turned into a pattern of size $i-1$ with at least $b-1$ black nodes by removing a leaf. We can therefore represent $A_v$ as a binary string $B_v$ of $n+1$ bits, where $B_v[0]=0$, and $B_v[i]= A_v[i]-A_v[i-1]$ for all $i=1,\ldots,n$. Notice that  since $A_v[i]= \sum_{\ell=0}^i B_v[\ell]$, each entry of $A_v$ can be retrieved from $B_v$ in $O(1)$ time using {\em rank} queries~\cite{Jacobson,Munro}.

\subsection{Pattern Matching}
\label{matching}

Before improving the above algorithm, we show that it can already be analyzed more carefully to get a bound of $O(n\cdot i)$ when the pattern size is known to be at most $i$. This means that in $O(n)$ space and $O(n\cdot i)$ construction time  we can build an index that answer queries in $O(1)$ time provided the pattern size is bounded by $i$. 
It is also useful for the {\em pattern matching} problem: without preprocessing, decide whether a given pattern $(i,j)$ appears in $T$.

In the case of strings, this problem can trivially be solved in $O(n)$ time by sliding a window of length $i$ through the string thus effectively considering every substring of length $i$. This sliding-window approach however does not extend to trees since we cannot afford to examine all connected subgraphs of $T$. We next show that, in trees,  searching for a pattern of size $i$ can be done in  $O(n\cdot i)$ time by using our above indexing algorithm. This is useful when the pattern is small (i.e., when $i=o(n)$). Obtaining $O(n)$ time remains our main open problem.

\begin{lemma}
\label{lemma:pms}
Given a tree $T$ with $n$ nodes that are colored black or white and a query pattern $(i,j)$,
we can check in $O(n\cdot i)$ time and $O(n)$ space if $T$ contains the pattern $(i,j)$.
\end{lemma}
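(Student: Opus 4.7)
The plan is to reuse the bottom-up DP from Section~2.1, but truncate every array $A_v$ (and its minimum counterpart) to its first $\min(|T_v|,i)+1$ entries. This is sound because the parent of $v$ only ever reads $A_v[\ell]$ for $\ell\le i-1$ when producing an array of length at most $i+1$ itself. Once the DP finishes I would answer the query by checking whether some $v$ satisfies $A_v^{\min}[i]\le j\le A_v^{\max}[i]$; by Lemma~\ref{lemma:minmax} this happens iff $(i,j)$ appears in $T$. The $O(n)$ space bound carries over from Section~2.1 by keeping only the arrays along the current DFS stack and storing them in the succinct $B_v$ form.

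The one real obstacle is obtaining time $O(n\cdot i)$ rather than the naive $O(n\cdot i^2)$ that falls out of the trivial $O(i^2)$ per-merge bound. At a 2-child node $v$ the truncated convolution costs $O(\min(|T_u|,i)\cdot \min(|T_w|,i))$, so the key combinatorial claim I will need is
\[
  \sum_{v\,:\,2\text{ children}} \min(|T_u|,i)\cdot \min(|T_w|,i)\;=\;O(n\cdot i).
\]
I plan to prove it by partitioning the 2-child nodes according to which of $|T_u|,|T_w|$ exceed $i$: \textbf{(a)} both are $\le i$; \textbf{(b)} exactly one exceeds $i$; \textbf{(c)} both exceed $i$.

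In case \textbf{(a)} we have $|T_v|\le 2i+1$, so every such $v$ lies inside a maximal subtree $T_r$ of size at most $2i+1$; within each such pairwise-disjoint $T_r$ the clean bound of Lemma~\ref{lem:quardratic} gives $\sum|T_u||T_w|\le |T_r|^2\le (2i+1)|T_r|$, and summing over the $T_r$'s yields $O(n\cdot i)$. In case \textbf{(b)} the per-node cost is $O(i\cdot \min(|T_u|,|T_w|))$, and the crucial point is that a case-(b) node cannot sit strictly inside the smaller child-subtree of another case-(b) node---its own ``big'' child would then have fewer than $i+1$ descendants---so the smaller subtrees across all case-(b) nodes are pairwise disjoint, their combined size is at most $n$, and multiplying by $i$ gives $O(n\cdot i)$. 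Case \textbf{(c)} is where the analysis needs the most care, since the per-node cost is $\Theta(i^2)$ and one must show there are only $O(n/i)$ such ``branching'' nodes; I plan to do this by a short induction on $|T_r|$, using that each case-(c) node forces two disjoint descendant subtrees of size at least $i+1$, which gives at most $(|T_r|-i-1)/i$ of them in $T_r$ and hence $O(n/i)$ overall, and then $i^2\cdot O(n/i)=O(n\cdot i)$. Adding the three cases together with the $O(n)$ work at 1-child and leaf nodes will give the claimed $O(n\cdot i)$ time.
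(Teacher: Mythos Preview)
Your proposal is correct and follows essentially the same approach as the paper: truncate each $A_v$ to length $\min(|T_v|,i)$ and bound the total two-child merge cost by a three-way case split on whether the children's subtree sizes exceed $i$. The paper organizes the partition slightly differently---it first separates off all nodes lying inside a maximal subtree of size $<i$ (rather than $\le 2i+1$) and then splits the remaining two-child nodes into $S_1$ (both children big) and $S_2$ (at least one child heading a maximal small subtree)---but the three key ingredients (Lemma~\ref{lem:quardratic} inside the small subtrees, the $O(n/i)$ bound on double-big branchings, and disjointness of the small child subtrees in the mixed case) are exactly the ones you use.
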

\begin{proof}
In our indexing algorithm, every node $v$ computes an array $A_v$ of size $|T_v|$.
When the pattern size is known to be $i$ we can settle for an array $A_v$ of size $\min\{|T_v|,i\}$. Recall from the above discussion that we can assume $T$ is a binary tree. Consider some node $v$ that has only one child $u$. We can  compute $A_v$ from $A_u$ in time $O(\min\{|T_v|,i\})=O(i)$. Summing over all such nodes $v$ gives at most $O(n\cdot i)$.
If on the other hand,  node $v$ has two children $u$ and $w$ then $A_v$ is computed from $A_u$ and $A_w$ in $O(\min\{|T_{u}|, i\} \cdot \min\{|T_{w}|, i\})$ time. We claim that summing this term over all nodes in $T$ that have two children gives
$O(n\cdot i)$.

To see this, first consider the subset of nodes $V = \{v\in T :  |T_v|< i \text{ and } |T_{parent(v)}|\ge i\}$, where $parent(v)$ denotes the parent of $v$ in $T$. Notice that each subtree $T_v\in \{T_{v} : v\in V\}$ is of size less than $i$ and that these subtrees are disjoint. By the proof of Lemma~\ref{lem:quardratic} we know that computing $A_v$ (along with every $A_u$ for vertices $u\in T_v$) is done in $O(|T_v|^2)$ time. The total time to compute $A_v$ for all nodes $v\in V$ and their descendants is therefore $cost(v) = \sum_{v\in V} |T_v|^2$. Since every $|T_v|< i$ and $\sum_{v\in V} |T_v| \le n$, we have that $cost(v)$ is upper bounded by $O(n\cdot i)$ that is achieved when all $|T_v|s$ are equal to $i$ and $|V|=n/i$.

The remaining set of nodes $S$ consists of all nodes $v$ such that $v$ has two children $u,w$ and $|T_v|\ge i$. We partition these nodes into $S_1=\{v\in S : |T_u| \ge i \text{ and } |T_w|  \ge i \}$ and $S_2 = S \setminus S_1$. Notice that $|S_1| = O(n/i)$. 
Therefore, computing $A_v$ for all nodes $v\in S_1$ can be done in $O(|S_1|\cdot i^2)=O(n\cdot i)$ time.
We are left only with the vertices of $S_2$. These are all vertices $v$ such that at least one of their children is in $V$. Denote this child as $d(v)$. Computing $A_v$ for all nodes in $S_2$ can therefore be done in time
$$\sum_{v\in S_2}O(|T_{d(v)}|\cdot i)  =    i\cdot \sum_{v\in S_2}O(|T_{d(v)}|) =  i \cdot \sum_{u\in V}O(|T_{u}|) =  O(i \cdot n).$$ \qed
\end{proof}

\subsection{An Improved Index}
\label{4russians}

In this subsection, we will gradually improve the construction time from $O(n^2)$ to $O(n^2/\log^2 n)$. For simplicity of the presentation, we will assume the  input tree $T$ is a rooted binary tree. This extends to arbitrary trees using a similar dummy-nodes trick as above.

\paragraph{\bf From trees to strings.} Recall that we can represent every $A_v$ by a binary string $B_v$ of $n+1$ bits where $B_v[0]$ is always zero and for $i=1,\ldots,n$,  $B_v[i]= A_v[i]-A_v[i-1]$.  We begin by showing that if $v$ has two children $u,w$ then the computation of $B_v$ can be done by solving a variant of jumbled pattern matching on the string $S_v= X_v\circ col(v) \circ Y_v$ (here $\circ$ denotes concatenation) of length $|S_v| = |T_u| + |T_w| +1$, where $X_v$ is obtained from $B_u$ by reversing it and removing its last bit, and $Y_v$ is obtained from $B_w$ by removing its first bit. We call the position in $S_v$ with $col(v)$ the \emph{split position} of $S_v$. Recall that $A_v[i]= col(v)+\max_{0 \leq \ell \leq i-1} \{A_u[\ell]+A_w[i-1-\ell]\}$. This is equal to the maximum number of 1s in a window of $S_v$ that is of length $i$ and includes the split position of $S_v$.

We are therefore interested only in windows including the split position, and this is the important distinction from the standard jumbled pattern matching problem on strings. Clearly, using the fastest $O(n^2/\log^2 n)$-time algorithm~\cite{MR12} for the standard string problem we can also solve our problem and compute $A_v$ in $O(|S|^2/\log^2 n)$ time. However, recall that for our total analysis (over all nodes $v$) to give $O(n^2/\log^2 n)$ we need the time to be $O(|X_v|\cdot|Y_v| /\log^2 n)$ and not $O((|X_v| + |Y_v|)^2 /\log^2 n)$.

\paragraph{\bf First speedup.} The $O(\log^2 n)$-factor speedup for jumbled pattern matching on strings~\cite{MR12} is achieved by a clever combination of lookup tables (also known as the ``Four Russians technique'') . One log factor is achieved by computing the maximum number of 1s in a window of length $i$ only when $i$ is a multiple of $s = (\log n)/6 $. Using a lookup table over all possible pairs of length-$s$ windows, a sliding window of size $i$ can be extended in $O(1)$ time to all windows of sizes $i+1,\ldots,i+ s -1$ that start at the same location (see~\cite{MR12} for details). Their algorithm can output in $O(n^2/ \log n)$ time an array of $O(n/\log n)$ words. For each $i$ that is a multiple of $s$, the array keeps one word storing the maximum number of 1s over all windows of length $i$ and another word storing the binary increment vector for the maximum number of 1s in all windows of length $i+1,\ldots,i+ s -1$.

By only considering windows that include the split position of $S_v$, this idea easily translates to an $O(|X_v|\cdot|Y_v| /\log n)$-time algorithm to compute $A_v$ and implicitly store it in $O((|X_v|+|Y_v|) /\log n)$ words. From this it is also easy to obtain an $O((|X_v|+|Y_v|) /\log n)$-words representation of $B_v$. Notice that if $v$ has a single child then the same procedure works with $|X_v|=0$ in time $O(|Y_v| /\log n)=O(n/\log n)$. Summing over all nodes $v$, we  get an $O(n^2/\log n)$-time solution for binary jumbled indexing on trees.

\paragraph{\bf Second speedup.}

In strings, an additional logarithmic improvement shown in~\cite{MR12} can be obtained as follows: When sliding a window of length $i$ ($i$ is a multiple of $s$) the window is shifted $s$ locations in $O(1)$ time using a lookup table over all pairs of binary substrings of length $s$ (representing the leftmost and rightmost bits in all these $s$ shifts). This further improvement yields an $O(n^2 /\log ^2 n)$-time algorithm for strings. In trees however this is not the case. While we can compute $A_v$ in $O((|X_v|+|Y_v|)^2 /\log ^2 n)$ time, we can guarantee $O(|X_v|\cdot|Y_v| /\log^2 n)$ time only if both $|X_v|$ and $|Y_v|$ are greater than $s$. Otherwise, say $|X_v|<s$ and $|Y_v|\ge s$, we will get $O(|X_v|\cdot|Y_v| / |X_v| \log n ) = O(|Y_v| /\log n)$ time.
This is because our windows must include the $col(v)$ index and so we never shift a window by more than $|X_v|$ locations.
Overcoming this obstacle is the main challenge of this subsection. It is achieved by carefully ensuring that the $O(|Y_v| /\log n) = O(n /\log n)$ costly constructions will be done only  $O(n /\log n)$ times.


\paragraph{\bf A micro-macro decomposition.} A {\em micro-macro decomposition}~\cite{MicroMacro} is a partition of $T$ into
$O(n/\log n)$ disjoint connected subgraphs called {\em micro trees}. Each micro tree is of size at most $\log n$, and at most two nodes in a micro tree are adjacent to nodes in other micro trees.
These nodes are referred to as \emph{top} and \emph{bottom boundary} nodes. The top boundary node is chosen as the root of the micro tree.
The {\em macro tree} is a rooted tree of size $O(n/\log n)$ whose nodes correspond to micro trees as follows (See Fig.~\ref{fig:micromacro}):  The top boundary node $t(C)$ of a micro tree $C$ is connected to a boundary node in the parent micro tree $parent(C)$ (apart from the root).
The boundary node $t(C)$ might also be connected to a top boundary node of a child micro tree $child(C)$.\footnote{The root of the macro tree is an exception as it might have a top boundary node connected to two (rather than one) child micro trees. We focus on the other nodes. Handling the root is done in a very similar way.}
The bottom boundary node $b(C)$ of $C$ is connected to top boundary nodes of at most two child micro trees $\ell(C)$ and $r(C)$ of $C$.

\begin{figure}[h!]
   \centering
   \includegraphics[scale=0.6]{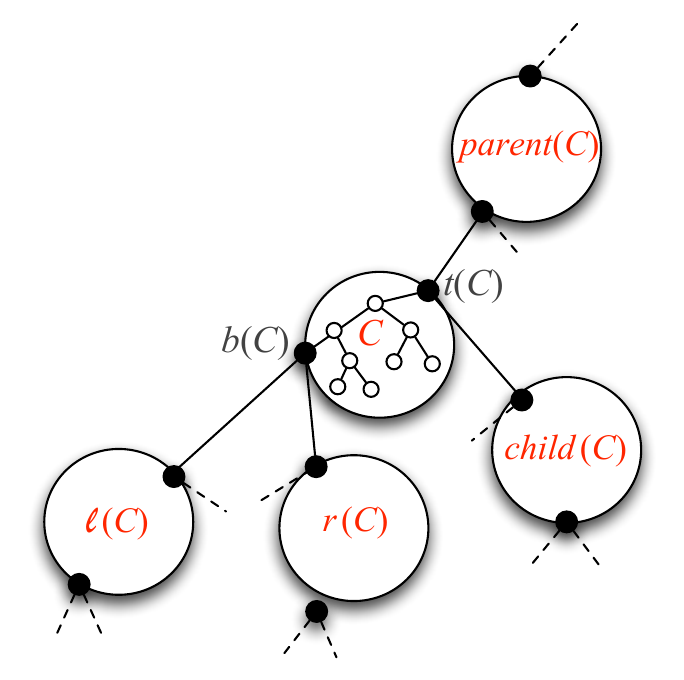}
   \caption{A micro tree $C$ and its neighboring micro trees in the macro tree. Inside each micro tree, the solid nodes correspond to boundary nodes and the hollow nodes to non-boundary nodes.}
   \label{fig:micromacro}
\end{figure}

\paragraph{\bf A bottom up traversal of the macro tree.}
With each micro tree $C$ we associate an array $A_C$. Let $T_C$ denote the union of micro tree $C$ and all its descendant micro trees (including the edges between them).
The array $A_C$ stores the maximum number of 1s (black nodes) in every pattern that includes the boundary node $t(C)$ and other nodes of $T_C$.
We also associate three auxiliary arrays: $A_b, A_t$ and $A_{tb}$.
The array $A_b$ stores the maximum number of 1s in every pattern that includes the boundary node $b(C)$ and possibly other nodes of $C$, $T_{\ell(C)}$, and $T_{r(C)}$. The array $A_t$ stores the maximum number of 1s in every pattern that includes the boundary node $t(C)$ and possibly other nodes of $C$ and $T_{child(C)}$. Finally, the array $A_{tb}$ stores the maximum number of 1s in every pattern that includes {\em both} boundary nodes $t(C)$ and $b(C)$ and possibly other nodes of $C$, $T_{\ell(C)}$, and $T_{r(C)}$.

We initialize for every micro tree $C$ its $O(|C|) =O(\log n)$ sized arrays. Arrays $A_C$ and $A_t$ are initialized to hold the maximum number of 1s in every pattern that includes $t(C)$ and nodes of $C$. This can be done in $O(|C|^2)$ time for each $C$ by rooting $C$ at $t(C)$ and running the algorithm from the previous subsection. Similarly, we initialize the array $A_b$ to hold the maximum number of 1s in every pattern that includes $b(C)$ and nodes of $C$. The array $A_{tb}$ is initialized as follows: First we check how many nodes are 1s and how many are 0s on the unique path between $t(C)$ and $b(C)$. If there are $i$ 1s and $j$ 0s we set $A_{tb}[k] =0$ for every $k< i+j$ and we set  $A_{tb}[i+j] = i$. We compute $A_{tb}[k]$ for all $k>i+j$ in total $O(|C|^2)$  time by contracting the $b(C)$-to-$t(C)$ path into a single node and running the previous algorithm rooting $C$ in this contracted node.
The total running time of the initialization step is therefore $O(n\cdot |C|^2 /\log n) = O(n \log n)$ which is negligible. Notice that during this computation we have computed the maximum number of 1s in all patterns that are completely inside a micro tree. We initialize the array $A$ (that is, only the first $\log n$ entries of $A$) with these values. In particular, this takes care of all patterns that do not contain any boundary node. 
We are now done with the leaf nodes of the macro tree.

We next describe how to compute the arrays of an internal node $C$ of the macro tree given the arrays of $\ell(C),r(C)$ and $child(C)$. We first compute the maximum number of 1s in all patterns that include $b(C)$ and possibly other vertices of $T_{\ell(C)}$ and $T_{r(C)}$. This can be done using the aforementioned string speedups in $O(|T_{\ell(C)}| \cdot |T_{r(C)}| / \log^2 n)$ time when both $|T_{\ell(C)}|>\log n$ and $|T_{r(C)}|>\log n$ and in $O(n / \log n)$ time otherwise.  
Using this and the initialized array $A_b$ of $C$ (that is of size $|C|\le \log n$)  we can compute the final array $A_b$ of $C$.  This is done by using the aforementioned string algorithm (on a string $S$ of length $|T_{\ell(C)}| + |T_{r(C)}|+1+ |C|$) restricted to the case where windows must include the split position (the split position separates $S$ to a substring of length $|T_{\ell(C)}| + |T_{r(C)}|+1$ and a substring of length $|C|\le \log n$). Using only the first speedup, this takes  time $O(|S| / \log n) = O(n/\log n)$.
Similarly, using the initialized $A_{tb}$ of $C$, we can compute the final array $A_{tb}$ of $C$ in  $O(n/\log n)$ time.

Next, we compute the array $A_t$ using the initialized array $A_t$ of $C$ and the array $A_t$ of $child(C)$ in time $O(n/ \log n)$. Finally, we compute $A_C$ of $C$ using $A_{tb}$ of $C$ and  $A_t$ of $child(C)$ in $O((|T_{\ell(C)}| + |T_{r(C)}| + 1+ |C|) \cdot |T_{child(C)}| / \log^2 n)$ time if both $|T_{\ell(C)}| + |T_{r(C)}| + 1+|C| >\log n$ and $|T_{child(C)}| > \log n$ and  in $O(n/ \log n)$ otherwise. To finalize $A_C$ we must then take the entry-wise maximum between the computed $A_C$ and $A_t$. This is because a pattern in $T_C$ may or may not include $b(C)$.
Finally, once $A_C$ is computed, we update the global array $A$ accordingly (by taking the entry-wise maximum between $A$ and $A_C$). 

To bound the total time complexity over all clusters $C$, notice that some computations required $O(\alpha(v)\cdot \beta(v)/ \log^2 n)$ when $\alpha(v)>\log n$ and $\beta(v)>\log n$ are the subtree sizes of two children of some node $v\in T$. We have already seen that the sum of all these terms over all nodes of $T$ is $O(n^2/\log^2 n)$. The other type of computations each require $O(n/\log n)$ time but there are at most $O(n/\log n)$ such computations ($O(1)$ for each micro tree) for a total of $O(n^2/\log^2 n)$.
This completes the proof of Theorem~\ref{thm:motifs}.

\subsection{Finding the Query Pattern}\label{subsec:listing}

In this subsection we extend the index so that on top of identifying in $O(1)$ time if a pattern $(i,j)$ appears in $T$, it can also locate in $O(\log n)$ time  a node $v\in T$ that is part of such a pattern appearance. We call this node an {\em anchor} of the appearance. This extension increases the space of the index from $O(n)$ bits to $O(n\log n)$ bits (i.e., $\Oh{n}$ words).

Recall that given a tree $T$ we build in $O(n^2/\log ^2 n)$ time an array $A$ of size $n=|T|$ where $A[i]$ stores the minimum and maximum values $i_{min}$ and $i_{max}$ such that  $(i,i_{min})$ and $(i,i_{max})$ appear in $T$. Now consider a {\em centroid decomposition} of $T$: A centroid node $c$ in $T$ is a node whose removal leaves no connected component with more than $n/2$ nodes. We first construct the array $A$ of $T$ in $O(n^2/\log ^2 n)$ time and store it in node $c$. We then recurse on each remaining connected component. This way, every node $v\in T$ will compute the array corresponding to the connected component whose centroid was $v$. Notice that this array is not the array $A_v$ since we do not insist the pattern uses $v$. Observe that since each array $A$ is implicitly stored in an $n$-sized bit array $B$, and since the recursion tree is balanced the total space complexity is $O(n\log n)$ bits. Furthermore, since every node in $T$ has degree at most three, removing the centroid leaves at most three connected components and so the time to construct all the arrays is bounded by $T(n) = T(n_1) + T(n_2)+T(n_3) + O(n^2/\log ^2 n)$ where $n_1+n_2+n_3 = n$ and every $n_i \le n/2$. This yields the time complexity   $T(n)= O(n^2/\log ^2 n)$.

Let $c$ denote the centroid of $T$ whose removal leaves at most three connected components $T_1,T_2$, and $T_3$ (recall we assume degree at most 3).
Upon query  $(i,j)$ we first check the array of $c$ if pattern $(i,j)$ appears in $T$ (i.e., if $i_{min} \le j  \le i_{max}$). If it does
 then we check the centroids of $T_1,T_2$ and $T_3$. If $(i,j)$ appears in any of them then we continue the search there. This way, after at most $O(\log n)$ steps we reach the first node $v$ whose connected component includes $(i,j)$ but none of its child components do. We return $v$ as the anchor node since such a pattern must include $v$. 
We note that the above can be extended so that for {\em every} occurrence of $(i,j)$ one node that is part of this occurrence is reported. 
Finally, we note that it was recently observed in~\cite{SPIRE13} that if we are willing to settle for an index of size $O(n^2)$ then we can locate the entire match (not just an anchor) in time proportional to the size of the match.

\section{Jumbled Pattern Matching on Grammars} \label{sec:construction}

In grammar-based compression, a binary string $S$ of length $n$ is compressed using a context-free grammar $G(S)$ in Chomsky normal form that generates $S$ and only $S$. Such a grammar has a unique \emph{parse tree} that generates $S$. Identical subtrees of this parse tree indicate substring repeats in $S$. The size of the grammar $g=|G(S)|$ is defined as the total number of variables and production rules in the grammar. Note that $g$ can be exponentially smaller than $n=|S|$. We show how to solve the jumbled pattern matching problem on $S$ by solving it on the parse tree of $G(S)$, taking advantage of subtree repeats. We obtain the following bounds:

\begin{theorem} \label{thm:SLP}
Given a binary string $S$ of length $n$ compressed by a context free grammar $G(S)$ of size $g$, we can construct in $\Oh{g^{2 / 3} n^{4 / 3}/(\log n)^{4/3}}$ time a data structure of size $O(n)$ bits that on query $(i,j)$ determines in $O(1)$ time if $S$ has a substring of length $i$ with exactly $j$ 1s.
\end{theorem}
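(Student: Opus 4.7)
The plan is to mimic the tree algorithm of Section~\ref{sec:motifs} on the parse tree of $G(S)$, exploiting the fact that it has only $g$ \emph{distinct} subtrees (one per variable) and processing each at most once. Processing variables in topological order, I would maintain for every variable $X$ a compact representation of an array $M_X$ whose $i$-th entry is the maximum number of 1s among all length-$i$ substrings of $S_X$ (and the analogous array of minima). For a production $X \to YZ$, $M_X$ is the pointwise maximum of $M_Y$, $M_Z$, and a ``crossing'' array that records the best 1-count in substrings of $S_X$ straddling the $Y$--$Z$ boundary. The crossing subproblem is exactly the boundary-restricted string instance already handled in Subsection~\ref{4russians}: concatenate a suffix of $S_Y$ with a prefix of $S_Z$ and invoke the MR12-style algorithm restricted to windows containing the split position.

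The naive implementation uses the full $S_Y, S_Z$ at every variable and therefore costs $\sum_{X \in G(S)} |S_X|^2/\log^2 n$, which can be as bad as $\Omega(gn^2/\log^2 n)$ and is useless. To avoid this I would fix a threshold $L$ and \emph{truncate}: for each production use only the last $L$ characters of $S_Y$ and the first $L$ characters of $S_Z$, obtained from the truncated boundary words already computed for the children. This correctly determines the crossing contribution of $M_X$ at every length up to $L$, at cost $O(L^2/\log^2 n)$ per variable and $O(gL^2/\log^2 n)$ summed over the $g$ DAG nodes.

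Substrings of length greater than $L$ then need a different treatment, and this is where I would bring in the grammar primitives: after $O(g)$ preprocessing one can answer prefix-rank queries on $S$ (number of 1s in the first $k$ characters) in $O(\log n)$ time. My plan is to sample starting positions of $S$ at spacing $\Theta(L)$, extract length-$L$ blocks via the grammar only at those positions, and slide a window across the blocks using the two-level lookup-table trick of MR12 adapted so that the inner granularity is $\Theta(\log n)$ while the outer sampling granularity is $\Theta(L)$. Summing the resulting cost, balancing against $gL^2/\log^2 n$, and tuning $L$ should give the stated $O(g^{2/3}n^{4/3}/\log^{4/3} n)$ bound; as in Subsection~\ref{4russians}, each $M_X$ can finally be encoded in $O(|S_X|)$ bits via increment vectors, so the total index size is $O(n)$ bits.

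The hard part will be the long-window regime. The truncation in the short-window phase throws away everything beyond $L$ on either side of a boundary, so the long crossings have to be recovered purely through the grammar's random-access / prefix-rank primitives rather than from cached child information. Engineering the sampling so that the MR12 lookup-table speedup genuinely transfers --- and in particular producing the unusual $\log^{4/3} n$ factor in the balance rather than the more natural $\log n$ or $\log^2 n$ --- is the technical heart of the argument, and I expect it to require a careful three-way trade-off between the spacing $L$, the lookup-table block size, and the per-block shift cost.
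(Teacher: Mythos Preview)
Your short-window phase --- truncated parse-tree DP at threshold $L$, costing $O(gL^2/\log^2 n)$ --- is sound and correctly computes $M_{\text{root}}[1..L]$. But the long-window phase is where your proposal and the paper diverge, and your plan there is not yet a proof.

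The paper does \emph{not} work on the parse tree at all. It invokes Gawrychowski's result to write $S = B_1 \circ \cdots \circ B_b$ as $b = O(n/\ell)$ blocks of length at most $\ell$ among which there are only $d = O(g)$ \emph{distinct} basic blocks $B^*_1,\ldots,B^*_d$. The crucial step you are missing is that it then precomputes, for every \emph{ordered pair} of basic blocks, the boundary table $A^*_{k,m}$ whose $i$-th entry is the maximum number of 1s over all length-$i$ windows starting in $B^*_k$ and ending in $B^*_m$. This costs $O(g^2\ell^2/\log^2 n)$ and is the dominant ``short'' term. With these $O(g^2)$ tables in hand, any long window spanning positions in blocks $B_k,\ldots,B_m$ decomposes into a fixed middle $B_{k+1}\circ\cdots\circ B_{m-1}$, contributing a precomputable offset $(i_{k,m},j_{k,m})$, plus a boundary piece that is already tabulated in $A^*_{k',m'}$ where $k',m'$ are the basic types of $B_k,B_m$. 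Each of the $O((n/\ell)^2)$ position pairs is then handled in $O(\ell/\log n)$ time, giving the $O(n^2/(\ell\log n))$ long-window cost. Balancing $g^2\ell^2/\log^2 n$ against $n^2/(\ell\log n)$ at $\ell=(n/g)^{2/3}(\log n)^{1/3}$ is precisely what produces the $(\log n)^{4/3}$ factor you found mysterious.

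Your sampling idea does not supply an equivalent mechanism. Extracting length-$L$ blocks at $n/L$ evenly spaced positions via grammar random access gives blocks with no bound on the number of \emph{distinct} ones, so you cannot afford to tabulate all pairwise boundary arrays (that would already be $O((n/L)^2\cdot L^2/\log^2 n)=O(n^2/\log^2 n)$), and without them each of the $O((n/L)^2)$ position pairs costs $O(L^2/\log^2 n)$ from scratch --- again $O(n^2/\log^2 n)$. The $g$ parse-tree crossing arrays you \emph{did} compute, one per production, do not cover arbitrary (suffix of one block, prefix of another) combinations, which is exactly what the long regime requires. So the ``three-way trade-off'' you anticipate cannot be assembled from the ingredients in your outline; the missing piece is Gawrychowski's bounded-distinct-blocks decomposition together with the all-pairs boundary tables built on top of it.
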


\begin{proof}
We will show how to compute the array $A$ such that $A[i]$ holds the maximum number of 1s in a substring of $S$ of size $i$. The minimum is found similarly.
We use a recent result of Gawrychowski~\cite{Gaw12} who showed how for any $\ell$, we can modify $G(S)$ in $\Oh{n}$ time by adding $\Oh{g}$ new variables such that every new variable generates a string of length at most $\ell$, and $S$ can be written as the concatenation of substrings generated by these $O(g)$ new variables. Thus, we can write $S$ as the concatenation of blocks $S=B_1 \circ \cdots \circ B_b$ with $b=\Oh{n/\ell}$ and $|B_j| \leq \ell$, such that amongst these blocks there are only $d=\Oh{g}$ distinct blocks \({B^*_1}, \ldots, {B^*_d}\). We refer to these $d$ blocks as \emph{basic blocks}.  
For each basic block ${B^*_k}$, \(1 \leq k \leq d\), we first build an array ${A^*_k}$ where $A^*_k[i]$ stores the maximum number of 1s over all substrings of $B^*_k$ of length $i$. This is done in $O(\ell^2 / \log^2 n)$ time per block (by using the  algorithm of~\cite{MR12} for strings) for a total of $O(g \cdot \ell^2 / \log^2 n)$.

We next handle substrings that span over two adjacent blocks. Namely, for each possible pair of basic blocks ${B^*_k}$ and ${B^*_m}$, \(1 \leq k \leq m \leq d\), we build a table ${A^*_{k, m}}$ where $A^*_{k, m}[i]$ stores the maximum number of 1s over all substrings of \({B^*_k \circ B^*_m}\) of length $i$ that start in ${B^*_k}$ and end in ${B^*_m}$. This is done in $O(\ell^2 /\log^2 n)$ time for each pair  for a total of $O(g^2\ell^2/\log^2 n)$. Recall that, since we use the  algorithm of~\cite{MR12}, the table ${A^*_{k, m}}$ is implicitly represented by an array of $O(\ell / \log n)$ words: For each $i$ that is a multiple of $\log n$, the array keeps one word storing the maximum number of 1s over all substrings of length $i$, and another word storing the binary increment vector for substrings of length $i+1,\ldots,i+ \log n -1$.

\begin{figure}[h!]
   \centering
   \includegraphics[scale=0.45]{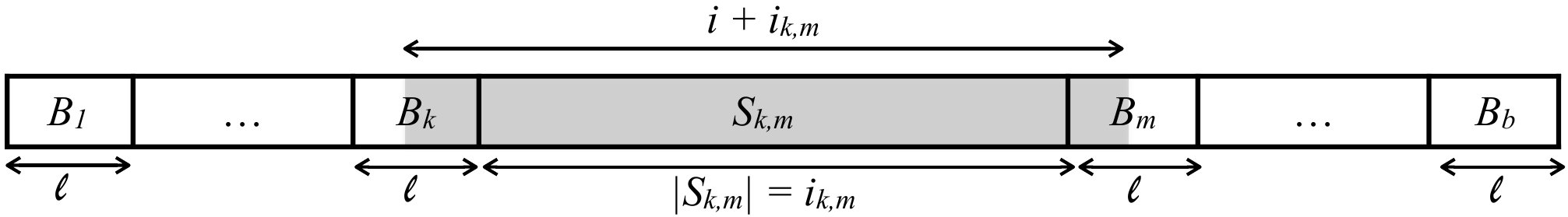}
   \caption{A string $S$ partitioned into blocks $B_1 \circ \cdots \circ B_b$, each of length at most $\ell$. The shaded substring is one of the  substrings considered for $A_{k, m}[i]$ as it includes the substring $S_{k, m}$ of length $i_{k, m}$ and a prefix and suffix of total length $i$.}
   \label{fig:grammar}
\end{figure}

Finally, we consider substrings that span over more than two blocks. 
For each pair of (non-basic) blocks ${B_k}$ and ${B_m}$, $1\le k < m \le b$, let 
$S_{k,m} = B_{k+1} \circ \cdots \circ B_{m-1}$ be a substring of $S$ that is of length $i_{k,m} = |S_{k,m}|$ and has $j_{k,m}$ $1$'s. Note that we can easily compute $i_{k,m}$ and $j_{k,m}$ of all $1\le k < m \le b$ in total time $O(n^2/\ell^2)$. 
For every $1\le k < m \le b$, we build a table $A_{k, m}$ of size $O(\ell)$ where $A_{k, m}[i]$ stores the maximum number of 1s over all substrings of \({B_k \circ \cdots \circ B_m}\) of length $i+i_{k,m}$ that start in ${B_k}$ and end in ${B_m}$. Notice that all such substring include $S_{k,m}$ as well as a suffix of $B_k$ and a prefix of $B_m$ whose total length is $i$ (see Fig.~\ref{fig:grammar}).
Therefore, for each $A_{k,m}$ we set $A_{k, m}[i]$ to be $j_{k,m}$ plus the maximal number of $1$'s in a suffix of $B_k$ and a prefix of $B_m$ whose total length is $i$. In other words, we set $A_{k, m}[i] = j_{k,m} + A^*_{k',m'}[i]$ where $k'$ (resp. $m'$) is such that the block ${B_k}$ (resp. ${B_m}$) corresponds to the basic block ${B^*_{k'}}$ (resp. ${B^*_{m'}}$). 
The computation of (an implicit representation of) each $A_{k, m}$ can be done in $O(\ell/ \log n)$ time by only setting $A_{k, m}[i]$ for $i$'s that are multiples of $\log n$ (the binary increment vectors of $A_{k, m}$ remain as in $A^*_{k', m'}$).
Since there are $O((n/\ell)^2)$ pairs of blocks and each pair requires $O(\ell/ \log n)$ time, we get a total of $O(n^2 / (\ell \log n))$ time.

Finally, once we have the implicit representation of all $A_{k, m}$'s we can compute the desired array $A$ from them in $O(n^2 / (\ell \log n))$ time: For each $i$ that is a multiple of $\log n$ and each $A_{k, m}$ we set $A[i]$ to be the maximum out of $A[i]$ and $A_{k, m}[i-i_{k,m}]$  in $O(1)$ time. The next $\log n$ entries of $A$ are computed in  $O(1)$ time (as done in~\cite{MR12}) from the increment vectors of $A[i]$ and $A_{k, m}[i-i_{k,m}]$.
To conclude, we get a total running time of $O(g^2\ell^2 /\log^2 n +n^2 / (\ell \log n))= \Oh{g^{2 / 3} n^{4 / 3}/(\log n)^{4/3}}$ when $\ell$ is chosen to be $(n/g)^{2/3} (\log n)^{1/3}$.
\qed
\end{proof}

\noindent We also note that similarly to the case of trees (Subsection~\ref{subsec:listing}), if we are willing to increase our index space to $O(n\log n)$ bits, then it is not difficult to turn indexes for {\em detecting} jumbled pattern matches in grammars into indexes for {\em locating} them. To obtain this, we build an index for $S$ and recurse (build indexes) on $S_1 = B_{1} \circ \cdots \circ B_{k}$ and $S_2 = B_{k+1} \circ \cdots \circ B_{d}$ where $|S_1|$ and $|S_2|$ are roughly $n/2$. This way, like in the centroid decomposition for trees, we can get in $O(\log n)$ time an anchor index of $S$. That is, an index of $S$ that is part of a pattern appearance. Furthermore, as opposed to trees, we can then find the actual appearance (not just the anchor) in additional $O(i)$ time by sliding a window of size $i$ that includes the anchor.

\section{Jumbled Pattern Matching on Bounded Treewidth Graphs}
\label{treewidth}

In this section we consider the extension of binary jumbled pattern matching to the domain of graphs: Given a graph $G$ whose vertices are colored either black or white, and a query $(i,j)$, determine whether $G$ has a connected subgraph $G'$ with $i$ white vertices and $j$ black vertices\footnote{The difference between the meaning of the query here and elsewhere in the paper is for ease of the presentation.}. This problem is also known as the (binary) graph motif problem in the literature. Fellows \emph{et al.}~\cite{FFHV11} provided an $n^{O(w)}$ algorithm for this problem, where $w$ is the treewidth of the input graph. Here we will substantially improve on this result by proving the following theorem, asserting that the problem is fixed-parameter tractable in the treewidth of the graph.
\begin{theorem}
\label{thm:treewidth}%
Binary jumbled pattern matching can be solved in $f(w)\cdot n^{O(1)}$ time on graphs of treewidth~$w$. The function $f(w)$ can be bounded by $w^{O(w)}$ in case a tree decomposition of width $w$ (see below) is provided with the input graph, and otherwise $f(w)=2^{O(w^3)}$.
\end{theorem}
Note that the algorithm in the theorem actually computes all queries $(i,j)$ that appear in $G$, and can thus be easily converted to an index for the input graph.\\

\noindent \textbf{Tree decompositions.} We begin by first introducing some necessary notation and terminology. Let $G=(V(G),E(G))$ be a graph. A \emph{tree decomposition} of $G$ is defined by a rooted tree $\mathcal{T}$ whose nodes are subsets of $V(G)$, called \emph{bags}, with the following two properties: $(i)$ the union of all subgraphs induced by the bags of $\mathcal{T}$ is $G$, and $(ii)$ for any vertex $x \in V(G)$, the set of all bags including $x$ induces a connected subgraph in $\mathcal{T}$. We use $\mathcal{X}$ to denote the set of bags in a given tree decomposition. The \emph{width} of the decomposition is defined as $\max_{X \in \mathcal{X}} |X|-1$. The \emph{treewidth} of $G$ is the smallest possible width of any tree decomposition of $G$. Given a bag $X$ of a given tree decomposition $\mathcal{T}$, we let $G_X$ denote the subgraph induced by the union of all bags in $\mathcal{T}_X$. Bodlaender~\cite{Bodlaender} gave an algorithm for computing a width-$w$ tree decomposition of a given graph with treewidth $w$ in $2^{O(w^3)} n$ time. We refer readers interested in further details to~\cite{DowneyFellows}.

We will work with a specific kind of tree decompositions, namely {\em nice tree decompositions}~\cite{Nice}. A nice tree decomposition is a binary rooted tree decomposition $\mathcal{T}$ with four types of bags: \emph{Leaf}, \emph{forget}, \emph{introduce}, and \emph{join}. Leaf bags are the leaves of $\mathcal{T}$ and are singleton sets which include a single vertex of $G$. A forget bag $X$ has one child $Y$ such that $X = Y \setminus \{x\}$ for some vertex $x$ of $G$. Thus,  $X$ \emph{forgets} the vertex $x$. Similarly, an introduce bag $X$ has one child $Y$ such that $X = Y \cup \{x\}$ for some vertex $x \notin Y$ of $G$. In this case, we say $X$ \emph{introduces} the vertex $x$. Finally a join bag $X$ has two children $Y$ and $Z$ in $\mathcal{T}$ with $X=Y=Z$. It is well known that given a tree decomposition of any graph, one can compute in polynomial-time a nice tree decomposition of the same graph with equal width and with an at most linear increase in its number of nodes~\cite{Nice}. Thus, from this point onwards we may assume that we are given a nice tree decomposition $\mathcal{T}$ of $G$ with width $w$. \\

\noindent \textbf{Positive partitions.} We next describe the main data structure that we compute in our algorithm. Let $X$ be an arbitrary bag. A partition\footnote{Here we slightly abuse our terminology and allow $X_0$ to be the empty set. } $\Pi_X=\{X_0,X_1,\ldots,X_x\}$ of $X$ is \emph{positive} for a given query $(i,j)$ in $G_X$ if there are $x$ disjoint connected subgraphs $G_1,\ldots,G_x$  of $G_X$ such that $(1)$ the total number of black (reps. white) vertices in $G'=G_1 \cup\cdots\cup G_x$  is $i$ (resp.  $j$), and $(2)$  $V(G') \cap X_0 = \emptyset$ and $V(G_\ell) \cap X = X_\ell$ for each $\ell=1,\ldots,x$ (see Fig.~\ref{fig:partition}). Thus, positive partitions capture \emph{partial occurrences} that intersect $X$ at exactly $X \setminus X_0$. These may not be actual occurrences as we do not require any edges between the different $G_i$'s, and so $G'$ itself may not be connected. We let $A_X[i,j]$ denote the set of all positive partitions for a query $(i,j)$, and let $A_X$ denote the array with an entry for each possible query $(i,j)$. We will require that the trivial partition where $X_0=X$ is only positive for the query $(0,0)$.

\begin{figure}[h!]
\centering
\includegraphics[scale=0.42]{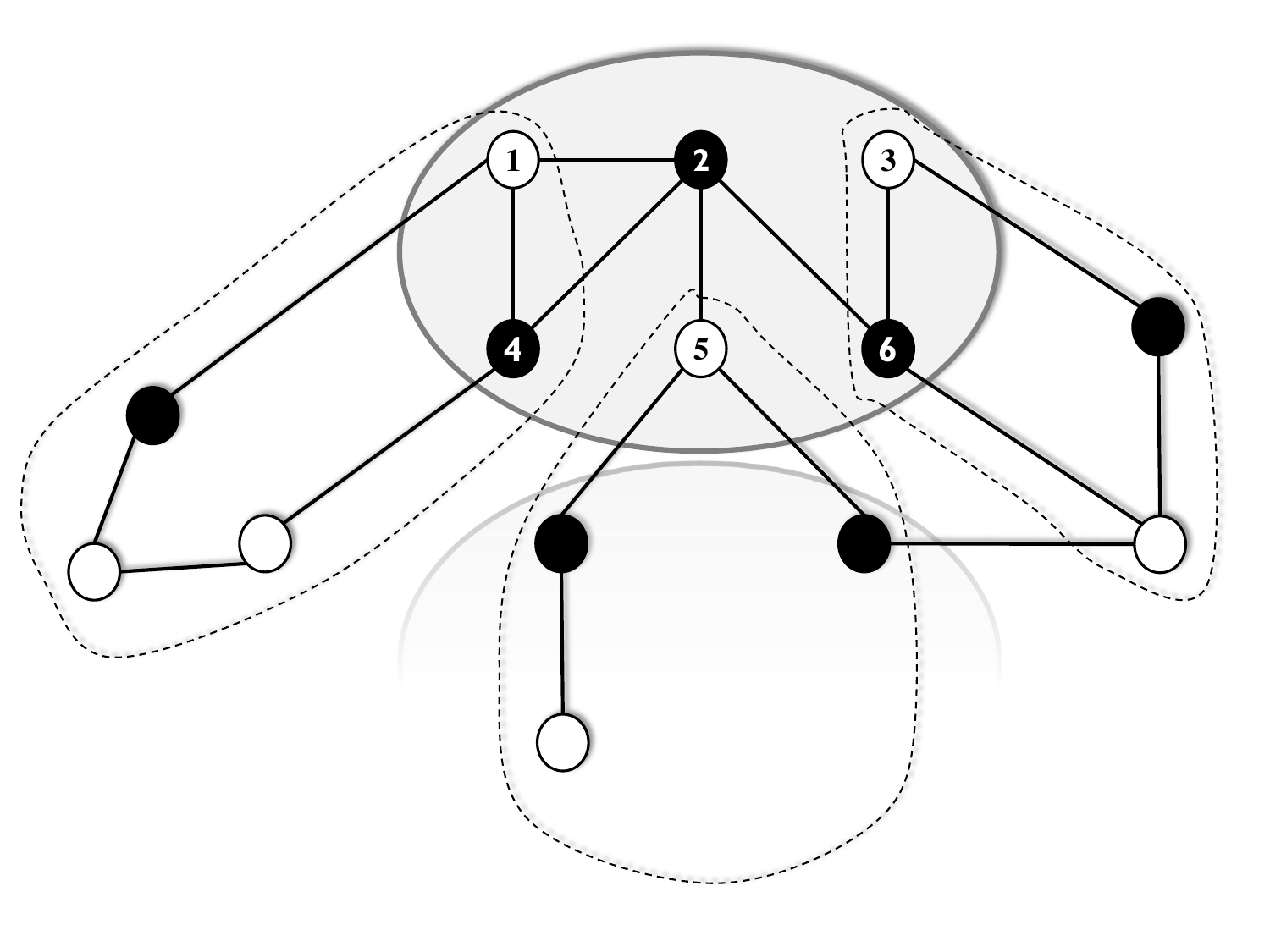}
\caption{A positive partition for the query $(7,6)$ in $G_X$, where $X=\{1,2,3,4,5,6\}$. The partition is defined by $\big\{\{2\}, \{1,4\}, \{5\},\{3,6\}\big\}$. Notice that vertex 2 is not in any of the connected graphs witnessing the partial occurrence. Also note that these graphs may or may not have edges between them.}
\label{fig:partition}
\end{figure}

Note that by definition, a query $(i,j)$ appears in $G_X$ iff there exists some partition into two sets $\{X_0,X_1\}$ that is positive for $(i,j)$ in $G$. Since $(i,j)$ appears in $G$ iff $(i,j)$ appears in $G_X$ for some bag $X \in \mathcal{X}$, this means that it is also positive for $(i,j)$ in some $G_X$. Thus, computing the arrays $A_X$ for all bags $X \in \mathcal{X}$ suffices for solving our problem. We do this by computing all arrays $A_X$ in a bottom-top fashion from the leaves to the root of $\mathcal{T}$. Note that the size of each array $A_X$ can easily be bounded by $w^{O(w)} n^2$, considering that the $w$'th Bell number is bounded by $w^{O(w)}$. Thus, to get a similar term in our running time, we will show that computing the array $A_X$ from the arrays of the children of $X$ can be done in polynomial-time. The computation on leaf bags is trivial (as they are singletons), and the computation on forget nodes is almost equally easy: If $X$ is a forget bag with child $Y$, then computing $A_X$ from $A_Y$ in this case amounts to converting each positive partition $\Pi_Y$ of $Y$ to a corresponding positive partition $\Pi_X$ of $X$ by removing $x$, the vertex forgotten by $X$, from the class it belongs to in $\Pi_Y$. We thus focus below on introduce nodes and join nodes.\\

\noindent \textbf{Introduce nodes:}  Let $X$ be an introduce bag with child $Y$ in $\mathcal{T}$, and let $x$ be the vertex introduced by $X$. Let us assume for ease of presentation that $x$ is colored white (the case where it is colored black is symmetric). By the properties of a tree decomposition, we know that $x$ is only adjacent to vertices $y \in Y$ in $G_X$~\cite{DowneyFellows}. Let $y_1,\ldots,y_\ell$ denote these neighbors of $x$, and let $G^k_X$ denote the graph obtained by deleting the edges $\{x,y_{k+1}\},\ldots,\{x,y_\ell\}$ from $G_X$ for each $k=0,\ldots,\ell$ ($G^\ell_X =G_X$). Similarly, let $A^k_X[i,j]$ denote the set of all positive partitions of $(i,j)$ in $G^k_X$. We will compute $A^0_X[i,j]$ from $A_Y$, and $A^k_X[i,j]$ from $A^{k-1}_X[i,j]$ for each $k > 0$. Finally, we will set $A_X[i,j] = A^\ell_X[i,j]$.

We begin with $k=0$. In this case, $x$ is an isolated vertex in $G^0_X$. Hence, there are only two types of positive partitions of $(i,j)$ in $G_X$:
\begin{itemize}
\item A partition $\Pi_X$ obtained by taking $\Pi_X= \Pi_Y \cup \{\{x\}\}$ for some $\Pi_Y \in A_Y[i-1,j]$ (thus, $\{x\}$ is a singleton set in $\Pi_X$).
\item A partition $\Pi_X$ obtained by taking a partition $\Pi_Y \in A_Y[i,j]$ and adding $x$ to $Y_0 \in \Pi_Y$ (thus, $x$ is in the set of vertices not included in the partial occurrence captured by $\Pi_X$).
\end{itemize}
It is easy to see that since $x$ is an isolated vertex the above description indeed captures all types of positive partitions for $G^0_X$, and so we can compute $A^0_X[i,j]$ from $A_Y$ in polynomial time.

Assume that  $k > 0$. Then any positive partition for $(i,j)$ in $G^{k-1}_X$ is also positive in $G^k_X$. Moreover, the only new positive partitions for $(i,j)$ in $G^k_X$ that were not positive in $G^{k-1}_X$ are partitions where $x$ and $y_k$ belong to the same class (although, there might be partitions of this type which were positive in $G^{k-1}_X)$. Thus, we compute $A^k_X[i,j]$ by first setting $A^k_X[i,j]=A^{k-1}_X[i,j]$. Then for each $\Pi \in A_X^{k-1}[i,j]$ with $x \in X_i \in \Pi$ and $y_k \in X_j \in \Pi$, $i \neq j$, we add the partition $\big(\Pi \setminus \{X_i,X_j\}\big) \, \cup \, \{X_i \cup X_j\}$ to $A^k_X[i,j]$ (assuming it is not already there). The total amount of computation time required here is obviously polynomial in the size of $A^{k-1}_X[i,j]$.\\

\noindent \textbf{Join nodes:} Consider a join bag $X$ with two children $Y$ and $Z$ in $\mathcal{T}$, and recall that $X=Y=Z$. For a pair of partitions $\Pi_Y=\{Y_0,\ldots,Y_y\}$ and $\Pi_Z=\{Z_0,\ldots,Z_z\}$ of $Y$ and $Z$, we define the partition $\Pi_Y \oplus \Pi_Z$ (the \emph{join} of $\Pi_Y$ and $\Pi_Z$) as follows: First we set $X_0$ to be $Y_0 \cap Z_0$. The remaining classes are constructed such that any pair of vertices in $X$ belong to the same class in $\Pi_X \setminus \{X_0\}$ iff they belong to the same class in $\Pi_Y \setminus \{Y_0\}$ or to the same class in $\Pi_Z \setminus \{Z_0\}$. Thus, the equivalence relation defined by $\Pi_X \setminus X_0$ is the transitive closure of the union of the two equivalence relations defined by $\Pi_Y \setminus \{Y_0\}$ and $\Pi_Z \setminus \{Z_0\}$.

Let $i_0$ and $j_0$ respectively denote the number of white and black vertices in $X$. We claim that if $(i_1,j_1)$ and $(i_2,j_2)$ are two  queries for which $\Pi_Y$ and $\Pi_Z$ are respectively positive in $G_Y$ and $G_Z$, then $\Pi_X=\Pi_Y \oplus \Pi_Z$ is positive for $(i_1+i_2-i_0,j_1+j_2-j_0)$. This can be verified by considering the connected components in the graph $G'_X = G^Y_1 \cup \cdots \cup G^Y_y \cup G^Z_1 \cdots \cup G^Z_z$, where $G^Y_1,\ldots,G^Y_y$ and $G^Z_1,\ldots,G^Z_z$ are sets of graphs witnessing that $\Pi_Y$ and $\Pi_Z$ are positive for $(i_1,j_1)$ in $G_Y$ and $(i_2,j_2)$ in $G_Z$. It is easy to see that the total number of white and black vertices in these components is $i=i_1+i_2-i_0$ and $j=j_1+j_2-j_0$, where $i_0$ white vertices and $j_0$ black vertices are subtracted due to double counting the vertex colors in $X$. Moreover, it can be verified that these components intersect $X$ as required by~$\Pi_X$, due to the fact that $\Pi_X$ is the transitive closure of $\Pi_Y \cup \Pi_Z$. Thus, $G'_X$ is a partial occurrence of $(i,j)$ in $G_X$, and $\Pi_X \in A_X[i,j]$.

On the other hand, it can also be seen on the same lines that if $(i,j)$ is a query for which $\Pi_X$ is positive in $G_X$, then it is either in $\Pi_Y[i,j]$ or $\Pi_Z[i,j]$, or we have $(i,j)=(i_1+i_2-i_0,j_1+j_2-j_0)$ for some pair of queries $(i_1,j_1)$ and $(i_2,j_2)$ for which $\Pi_Y$ and $\Pi_Z$ are positive in $G_Y$ and $G_Z$. We can therefore compute $A_X[i,j]$ by first setting $A_X[i,j]$, and then examining all pairs $(i_1,j_1)$ and $(i_2,j_2)$ as above. For each such pair, we compute all partitions $\Pi_Y \oplus \Pi_Z$ for $\Pi_Y \in A_Y[i_1,j_1]$ and $\Pi_Z \in A_Z[i_2,j_2]$. Note that the entire computation of $A_X$ requires time which is polynomial in the total sizes of $A_Y$ and $A_Z$.\\

\noindent \textbf{Summary.} We have shown above how to compute, for any given query $(i,j)$, the array $A_X$ for each bag $X$ of $\mathcal{T}$ in $w^{O(w)}n^{O(1)}$ time. As the total number of bags is $O(n)$, we obtain an algorithm whose total running time is $w^{O(w)}n^{O(1)}$, excluding the time required to compute the nice tree decomposition $\mathcal{T}$. This completes the proof of Theorem~\ref{thm:treewidth}. We note that our algorithm straightforwardly extends to an $w^{O(w)}n^{O(c)}$ time algorithm for the case where the vertices of $G$ are colored with $c$ colors.

\section{Conclusions and Open Problems}
In this paper we considered the binary jumbled pattern matching problem on trees, bounded treewidth graphs, and strings compressed by grammars. We gave an $\tilde O(g^{2 / 3} n^{4 / 3})$-time solution for strings of length $n$ represented by grammars of size $g$, an $f(w)\cdot n^{O(1)}$-time solution for graphs with treewidth $w$, and an $O(n^2 / \log^2 n)$-time solution for trees. 
In the latter result, we showed how to determine in $O(1)$ time if a query pattern appears, and how to locate in $O(\log n)$ time a node of this appearance. With a linear-space solution, locating the entire appearance remains an open problem.
Using Lemma~\ref{lemma:pms}, the construction time for trees can be made $O(n\cdot i/\log^2 n)$ if the query patterns are known to be of size at most $i$. We also note here that the construction time can be made faster on trees that have many identical rooted subtrees. This is because the bottom-up construction does not need to be applied on the same subtree twice.

Finally,  the main open problems stemming from our work is: (1) To obtain a faster construction of the linear-space index for strings. Our index for trees implies that any construction speedup for  strings implies a construction speedup for trees. (2) To develop an algorithm for the non-indexing variant of binary jumbled pattern matching on trees whose performance is closer to the performance of the corresponding algorithm on strings (\emph{i.e.} the $O(n)$ sliding window algorithm).

\section{Acknowledgments}
We thank the anonymous reviewers for their  helpful comments.

\bibliographystyle{plain}
\bibliography{jumbled}
\end{document}